\theoremstyle{plain}
\newtheorem{theorem}{Theorem}
\newtheorem{lemma}{Lemma}
\theoremstyle{definition}
\newtheorem{definition}{Definition}
\newtheorem{example}{Example}
\theoremstyle{remark}
\newtheorem{remark}{Remark}
\title{The Exact Load-Memory Tradeoff of Multi-Access Coded Caching With Combinatorial Topology\thanks{This work was supported by the European Research Council (ERC) through the EU Horizon 2020 Research and Innovation Program under Grant 725929 (Project DUALITY). An extended version of this work~\cite{Brunero2021FundamentalLimitsCombinatorial} has been submitted to IEEE Transactions on Information Theory.}}
\author{\IEEEauthorblockN{Federico Brunero\textsuperscript{\orcidlink{0000-0002-6980-3827}} and Petros Elia\textsuperscript{\orcidlink{0000-0002-3531-120X}}}
\IEEEauthorblockA{Communication Systems Department, EURECOM, Sophia Antipolis, France\\
Email: \{brunero, elia\}@eurecom.fr}}
\begin{document}

\bstctlcite{IEEEexample:BSTcontrol}

\nocite{MaddahAli2014FundamentalLimitsCaching, Parrinello2020FundamentalLimitsCoded, Lampiris2021ResolvingFeedbackBottleneck, Toelli2020MultiAntennaInterference, Zhang2018CodedCachingArbitrary, Zhang2021DeepLearningWireless, Hachem2017CodedCachingMulti, Joudeh2021FundamentalLimitsWireless, Lampiris2020FullCodedCaching}

\maketitle

\begin{abstract}
    
    Recently, Muralidhar \emph{et al.} proposed a novel multi-access system model where each user is connected to multiple caches in a manner that follows the well-known combinatorial topology of combination networks. For such multi-access topology, the same authors proposed an achievable scheme, which stands out for the unprecedented coding gains even with very modest cache resources. In this paper, we identify the fundamental limits of such multi-access setting with exceptional potential, providing an information-theoretic converse which establishes, together with the inner bound by Muralidhar \emph{et al.}, the exact optimal performance under uncoded prefetching.
    
\end{abstract}

\begin{IEEEkeywords}
  Coded caching, combinatorial topology, index coding, information-theoretic converse, multi-access coded caching (MACC).
\end{IEEEkeywords}

\section{Introduction}

Coded caching was introduced in~\cite{MaddahAli2014FundamentalLimitsCaching} as a coding-based communication technique with the purpose of reducing significantly the amount of data to be transferred from a centralized server to its cache-aided receiving users. The key idea behind coded caching involves an accurate joint design of the \emph{placement phase} and the \emph{delivery phase}. During the placement phase, which happens during off-peak hours, the caches of the users are preemptively filled without knowing the future requests. The caching is carefully performed so that, once the requests of the users are revealed, the amount of bits to be transferred during the delivery phase, which takes place when the network is saturated, is minimized. In the standard single-stream broadcast channel model with $K$ receiving users each of which is able to store in its cache a fraction $\gamma$ of the main library, coded caching is able to provide a sizeable coding gain equal to $K\gamma + 1$, where such coding gain represents simply the number of users to which a coded message is useful at the same time.

Current research on coded caching spans several topics such as the impact of multiple antennas on caching~\cite{Parrinello2020FundamentalLimitsCoded, Lampiris2021ResolvingFeedbackBottleneck, Toelli2020MultiAntennaInterference}, the interplay between caching and file popularity~\cite{Zhang2018CodedCachingArbitrary, Zhang2021DeepLearningWireless, Hachem2017CodedCachingMulti} and a variety of other scenarios~\cite{Joudeh2021FundamentalLimitsWireless, Lampiris2020FullCodedCaching}. For a thorough review of the existing coded caching works, we strongly encourage the reader to refer to the longer version of this work~\cite{Brunero2021FundamentalLimitsCombinatorial}.

\subsection{Multi-Access Coded Caching}

Differently from the model in~\cite{MaddahAli2014FundamentalLimitsCaching} where each user has access to its own single dedicated cache, it is conceivable that in several scenarios each cache serves more than one user, and that each user can connect to more than one cache. For instance, in dense cellular networks, the cache-aided access points (APs) could have overlapping coverage areas, hence allowing each user to connect to more than one AP. Such scenario motivated the work in~\cite{Hachem2017CodedCachingMulti}, where the authors introduced a new parameter $\lambda$ to keep into consideration the number of caches that each user can access. The corresponding model defined by $\lambda$ as well as by the number of users $K$, the number of library files $N$, and the cache size of $M$ files was referred to as the multi-access coded caching (MACC) model. Such model includes $\Lambda$ caches and $K = \Lambda$ users, where each of them is connected to $\lambda > 1$ consecutive caches in a cyclic wrap-around fashion.

Since the introduction of the aforementioned multi-access model with cyclic wrap-around topology, various works focused on the design of coding schemes that leverage the multi-access nature of the problem. For instance, a caching-and-delivery scheme was first proposed in the original work~\cite{Hachem2017CodedCachingMulti} with a decentralized (stochastic) cache placement, where this scheme provided improved local caching gains. Another achievable scheme preserving both the full local caching gain and the optimal coding gain of $\Lambda \lambda \gamma + 1$ was instead presented in~\cite{Serbetci2019MultiAccessCoded}, albeit for the rather unrealistically demanding scenario where $\lambda = (\Lambda - 1)/\Lambda\gamma$. Another notable work is then~\cite{Reddy2020RateMemoryTrade}, where the authors designed a novel scheme for any $\lambda \geq 1$, which was proved, for the similarly demanding regime of $\lambda \geq \Lambda/2$ and $\Lambda\gamma \leq 2$, to be at a factor of at most $2$ from the optimal under the assumption of uncoded placement. Other relevant works investigated the MACC problem and its connection to topics such as privacy and secrecy~\cite{Namboodiri2021MultiAccessCoded, Namboodiri2021MultiAccessCodeda}, structured index coding problems~\cite{Reddy2021StructuredIndexCoding} and PDA designs~\cite{Cheng2021NovelTransformationApproach}.

Recently, some interest arose towards new multi-access models which deviate from the cyclic wrap-around topology in~\cite{Hachem2017CodedCachingMulti}. For example, a new MACC paradigm was presented in~\cite{Katyal2021MultiAccessCoded}, which involved topologies that are inspired by cross resolvable designs (CRDs), a special class of designs in combinatorics. However, a substantial breakthrough came with the work in~\cite{Muralidhar2021MaddahAliNiesen}, where the authors proposed a MACC model enjoying the same amount of resources $\lambda$ and $\Lambda\gamma$, but where now the users and the caches are connected following the well-known combinatorial topology of combination networks~\cite{Ji2015FundamentalLimitsCaching}. This was a breakthrough because it allowed for the deployment of a subsequent scheme, presented in~\cite{Muralidhar2021MaddahAliNiesen} as a generalization of the original Maddah-Ali and Niesen (MAN) scheme in~\cite{MaddahAli2014FundamentalLimitsCaching}, that achieves the astounding coding gain $\binom{\Lambda\gamma + \lambda}{\lambda}$ far exceeding $\Lambda\gamma + 1$ even for small values of $\lambda$ and $\Lambda\gamma$, which is the regime that really matters.

\subsection{Main Contributions}

Our work explores the fundamental limits of this undoubtedly powerful MACC model by Muralidhar \emph{et al.}, which based on the findings in~\cite{Muralidhar2021MaddahAliNiesen} has astounding performance. Such fundamental limits had remained entirely unknown as no information-theoretic converse has ever been developed. We here establish the exact optimal performance with the introduction of a novel information-theoretic lower bound on the optimal worst-case communication load under the assumption of uncoded placement. Further results are presented in~\cite{Brunero2021FundamentalLimitsCombinatorial}, such as a generalization of the achievable scheme in~\cite{Muralidhar2021MaddahAliNiesen} as well as novel information-theoretic converses for the topology-agnostic multi-access setting, i.e., the setting where it is not known a priori how the $K$ users are connected to the $\Lambda$ caches in the system.

\subsection{Paper Outline}

The paper is organized as follows. The MACC model and some preliminary definitions are presented in \Cref{sec: System Model}. \Cref{sec: Main Result} presents the information-theoretic converse, whereas its general proof is described in~\Cref{sec: Proof of the Converse Bound}. \Cref{sec: Conclusions} concludes the paper.

\subsection{Notation}

We denote by $\mathbb{Z}^{+}$ the set of positive integers. 
For $n \in \mathbb{Z}^{+}$, we define $[n] \coloneqq \{1, \dots, n\}$. If $a, b \in \mathbb{Z}^{+}$ such that $a < b$, then $[a : b] \coloneqq \{a, a + 1, \dots, b - 1, b\}$. For sets we use calligraphic symbols, whereas for vectors we use bold symbols. Given a finite set $\mathcal{A}$, we denote by $|\mathcal{A}|$ its cardinality. We denote by $\binom{n}{k}$ the binomial coefficient and we let $\binom{n}{k} = 0$ whenever $n < 0$, $k < 0$ or $n < k$.

\section{System Model}\label{sec: System Model}

We consider the centralized coded caching scenario where one single server has access to a library $\mathcal{L} = \{W_{n} : n \in [N]\}$ containing $N$ files of $B$ bits each. The server is connected to $K$ users through an error-free broadcast link. In the system there are $\Lambda$ caches, each of size $MB$ bits. In agreement with the system model in~\cite{Muralidhar2021MaddahAliNiesen}, each user is connected \emph{exactly} and \emph{uniquely} to a subset of $\lambda$ caches for some fixed value of $\lambda \in [\Lambda]$, which consequently implies that there are $K = \binom{\Lambda}{\lambda}$ users for any given $\Lambda$ and $\lambda \in [\Lambda]$. We denote\footnote{For ease of notation, we will often omit braces and commas when indicating sets. For instance, user $\{1, 2\}$ in~\Cref{fig: MACC Model Example} is denoted as $12$.} by $\mathcal{U} \subseteq [\Lambda]$ the user connected to the $|\mathcal{U}| = \lambda$ caches in the set $\mathcal{U}$. We further assume that the link between the server and the users is the main bottleneck, whereas we assume that the channel between each user and its assigned caches has infinite capacity. As is common, we assume that $N \geq K$. Such setting is completely described by the tuple $(\Lambda, \lambda, N, M)$ and we refer to it as the MACC problem with combinatorial topology.

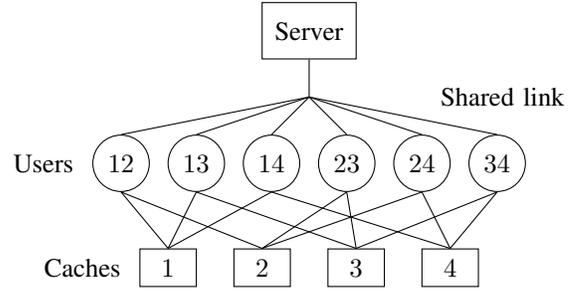
\begin{figure}
    \centering
    \tikzset{cnode/.style={draw, circle, inner sep = 0, minimum size = 0.75cm}}
    \tikzset{snode/.style={draw, rectangle, text centered}}
    \begin{tikzpicture}
        \node[snode, minimum height = 0.75cm, minimum width = 1.25cm](server){Server};
        \draw (server.south)--++(0, -0.5);
        \foreach \a/\b/\c/\n in {1/1/2/-2.5, 2/1/3/-1.5, 3/1/4/-0.5, 4/2/3/0.5, 5/2/4/1.5, 6/3/4/2.5}{
        \draw ($(server.south) + (0, -0.5)$)--++(\n, -0.5) node[cnode, below](user\a){$\b\c$};
        }
        \node at (user1.west)[left, xshift = -0.125cm]{Users};
        \node at ($(server.south) + (1.5, -0.5)$)[right, xshift = 0.125cm]{Shared link};
        \node at ($(user1.south) + (0.625, -0.75)$)[snode, below, minimum height = 0.5cm, minimum width = 0.75cm](cache1){$1$};
        \node at ($(user1.south) + (1.875, -0.75)$)[snode, below, minimum height = 0.5cm, minimum width = 0.75cm](cache2){$2$};
        \node at ($(user1.south) + (3.125, -0.75)$)[snode, below, minimum height = 0.5cm, minimum width = 0.75cm](cache3){$3$};
        \node at ($(user1.south) + (4.375, -0.75)$)[snode, below, minimum height = 0.5cm, minimum width = 0.75cm](cache4){$4$};
        \draw (user1.south)--(cache1.north); \draw (user1.south)--(cache2.north);
        \draw (user2.south)--(cache1.north); \draw (user2.south)--(cache3.north);
        \draw (user3.south)--(cache1.north); \draw (user3.south)--(cache4.north);
        \draw (user4.south)--(cache2.north); \draw (user4.south)--(cache3.north);
        \draw (user5.south)--(cache2.north); \draw (user5.south)--(cache4.north);
        \draw (user6.south)--(cache3.north); \draw (user6.south)--(cache4.north);
        \node at (cache1.west)[left, xshift = -0.125cm]{Caches};
    \end{tikzpicture}
    \caption{MACC problem with combinatorial topology and $\Lambda = 4$ caches, where each user is connected exactly and uniquely to a subset of $\lambda = 2$ caches.}
    \label{fig: MACC Model Example}
\end{figure}

The communication procedure proceeds as follows. During the placement phase, which typically occurs well before the delivery phase, the central server fills the caches without any knowledge of the future requests from the users. The delivery phase, which typically happens when the network is saturated and interference-limited, commences when the file-requests of all users are simultaneously revealed. During delivery, the server prepares some coded messages which are sent over the bottleneck shared link, so that each user can retrieve the missing information from the received transmission. The users cancel the interference terms that appear in the broadcast transmission, and do so by means of the cached contents they have access to, eventually decoding their own messages. As a consequence, the worst-case communication load $R$ is defined as the total number of transmitted bits, normalized by the file-size $B$, that can guarantee the correct delivery of any $K$-tuple of requested files in the worst-case scenario. The optimal communication load $R^\star$ is formally defined as
\begin{equation}
    R^\star(M) \coloneqq \inf\{R : \text{ $(M, R)$ is achievable}\}
\end{equation}
where the tuple $(M, R)$ is said to be \emph{achievable} if there exists a caching-and-delivery procedure for which, for any possible demand, a load $R$ can be guaranteed for a given memory value $M$.

We use the notation $W_{d_{\mathcal{U}}}$ to denote the file requested by the user identified by $\mathcal{U}$ for some $\mathcal{U} \subseteq [\Lambda]$ with $|\mathcal{U}| = \lambda$, which we remind the reader is simply the user connected exactly and uniquely to the $\lambda$ caches in the set $\mathcal{U}$. For the sake of simplicity, we denote by $\bm{d} = (d_{\mathcal{U}} : \mathcal{U} \subseteq [\Lambda], |\mathcal{U}| = \lambda)$ the demand vector containing the indices of the files requested by the users in the system, i.e., $d_{\mathcal{U}} \in [N]$ for each $\mathcal{U} \subseteq [\Lambda]$ with $|\mathcal{U}| = \lambda$. The following example can help familiarize the reader with the setting.

\begin{example}[$\Lambda = 4, \lambda = 2, N, M$]
    Consider the MACC problem with combinatorial topology and $\Lambda = 4$ caches in~\Cref{fig: MACC Model Example}. Each set of $\lambda = 2$ caches is uniquely assigned to a user, hence there are $K = \binom{\Lambda}{\lambda} = 6$ users in total. Recalling that each user is identified by the set of $2$ caches it is connected to and that for simplicity we omit braces and commas when indicating sets, we let $12$ represent the user connected to cache~$1$ and cache~$2$, we let $13$ represent the user connected to cache~$1$ and cache~$3$, and so on. The demand vector is given by $\bm{d} = (d_{12}, d_{13}, d_{14}, d_{23}, d_{24}, d_{34})$, where $d_{12} \in [N]$ is the index of the file requested by the user $12$, $d_{13} \in [N]$ is the index of the file requested by the user $13$, and so on.
\end{example}

Our goal is to provide a converse bound on the optimal worst-case load under the assumption of uncoded placement, whose definition is given in the following.
\begin{definition}[Uncoded Cache Placement]\label{def: Uncoded Cache Placement}
    A cache placement is \emph{uncoded} if the bits of the files are simply copied within the caches of the users.
\end{definition}

Denoting by $R^\star_{\text{u}}$ the optimal worst-case load under uncoded placement, it trivially holds $R^\star_{\text{u}} \geq R^\star$.

\section{Main Result}\label{sec: Main Result}

We present the main contribution of the paper. The derivation of the converse bound adopts the index coding technique first proposed in~\cite{Wan2020IndexCodingApproach}. Our main challenge will be to design the converse in such a way that it tightly captures the fact that each user, differently from the original setting in~\cite{MaddahAli2014FundamentalLimitsCaching}, is connected to $\lambda$ caches for some $\lambda \in [\Lambda]$. The result is stated in the following theorem.

\begin{theorem}\label{thm: Converse Bound Theorem}
    Consider the multi-access coded caching problem with combinatorial topology with parameters $(\Lambda, \lambda, N, M)$. Under the assumption of uncoded cache placement, the optimal worst-case communication load $R^\star_{\textnormal{u}}$ is a piecewise linear curve with corner points
    \begin{equation}
        (M, R^\star_{\textnormal{u}}) = \left(t \frac{N}{\Lambda}, \frac{\binom{\Lambda}{t + \lambda}}{\binom{\Lambda}{t}} \right), \quad \forall t \in [0 : \Lambda].
    \end{equation}
\end{theorem}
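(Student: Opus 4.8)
The plan is to reduce the delivery problem to an index coding instance and then invoke the acyclic-subgraph converse of~\cite{Wan2020IndexCodingApproach}. First I would fix an arbitrary uncoded placement and split each file $W_n$ into subfiles $W_{n,\mathcal{T}}$, where $\mathcal{T} \subseteq [\Lambda]$ collects the caches storing those particular bits. Since user $\mathcal{U}$ reads exactly the caches in $\mathcal{U}$, it already knows every $W_{n,\mathcal{T}}$ with $\mathcal{T} \cap \mathcal{U} \neq \emptyset$ and is missing precisely the subfiles $W_{d_{\mathcal{U}},\mathcal{T}}$ with $\mathcal{T} \cap \mathcal{U} = \emptyset$. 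Fixing a demand $\bm{d}$ with $K$ distinct files (possible because $N \geq K$), this is a legitimate index coding problem, so the worst-case load is lower bounded by the total size of any set of desired subfiles whose side-information subgraph is acyclic.

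The conceptual core, and the step I expect to be the main obstacle, is choosing that acyclic set so as to tightly reflect the $\lambda$-cache access of every user. For a permutation $\sigma$ of the caches $[\Lambda]$, I would retain a desired subfile $W_{d_{\mathcal{U}},\mathcal{T}}$ (so $\mathcal{U} \cap \mathcal{T} = \emptyset$ and $|\mathcal{U}| = \lambda$) exactly when every cache in $\mathcal{U}$ is ranked above every cache in $\mathcal{T}$ under $\sigma$, and denote by $\mathcal{J}_\sigma$ the resulting collection. Acyclicity would follow from a potential argument: an edge from $W_{d_{\mathcal{U}_1},\mathcal{T}_1}$ to $W_{d_{\mathcal{U}_2},\mathcal{T}_2}$ can exist only if $\mathcal{T}_1 \cap \mathcal{U}_2 \neq \emptyset$, and any common cache is at once at most the $\sigma$-largest element of $\mathcal{T}_1$ and strictly above the $\sigma$-largest element of $\mathcal{T}_2$; hence $\max_\sigma \mathcal{T}$ strictly decreases along every edge and no cycle survives. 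This yields, for each $\sigma$, the bound $R^\star_{\mathrm{u}} B \geq \sum_{(\mathcal{U},\mathcal{T}) \in \mathcal{J}_\sigma} |W_{d_{\mathcal{U}},\mathcal{T}}|$.

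Since the worst-case load dominates the average over demands, I would then symmetrize by averaging this inequality over all $\Lambda!$ permutations and over all demands assigning the $K$ distinct files injectively to the users. A level-$t$ subfile survives in $\mathcal{J}_\sigma$ with probability $1/\binom{t+\lambda}{\lambda}$ — the chance that the $\lambda$ caches of $\mathcal{U}$ occupy the top $\lambda$ positions among the $t+\lambda$ caches of $\mathcal{U} \cup \mathcal{T}$ — while demand-averaging replaces each $|W_{d_{\mathcal{U}},\mathcal{T}}|$ by the file-averaged level-$t$ content. Collecting terms and using the identity $\binom{\Lambda}{\lambda}\binom{\Lambda-\lambda}{t} = \binom{\Lambda}{t+\lambda}\binom{t+\lambda}{\lambda}$ collapses the coefficient of the level-$t$ content to exactly $\binom{\Lambda}{t+\lambda}/\binom{\Lambda}{t}$, giving
\begin{equation}
    R^\star_{\mathrm{u}} \geq \sum_{t=0}^{\Lambda} \frac{\binom{\Lambda}{t+\lambda}}{\binom{\Lambda}{t}}\, Y_t,
\end{equation}
where $Y_t$ is the fraction of the library, averaged over files, cached by exactly $t$ caches, so that $\sum_t Y_t = 1$, and summing the per-cache memory constraints over all $\Lambda$ caches yields $\sum_t t\, Y_t \leq \Lambda M / N$.

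It then remains to minimize the right-hand side over all admissible profiles $\{Y_t\}$, a small linear program whose optimum is the lower convex envelope of the points $(tN/\Lambda, f(t))$ with $f(t) \coloneqq \binom{\Lambda}{t+\lambda}/\binom{\Lambda}{t} = \prod_{i=1}^{\lambda} \frac{\Lambda - t - \lambda + i}{t+i}$. The remaining technical lemma is to show that $f$ is convex and decreasing on $[0:\Lambda]$, which reduces to a finite combinatorial convexity check; this guarantees that the envelope bends at every integer $t$ and that concentrating all of $Y$ at level $t$ is optimal when $M = tN/\Lambda$, so the converse attains $f(t)$ at each stated corner point. Together with memory sharing — which renders the bound piecewise linear through these points — and the matching inner bound of~\cite{Muralidhar2021MaddahAliNiesen}, this would identify $R^\star_{\mathrm{u}}$ with the claimed piecewise-linear curve.
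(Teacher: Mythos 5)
Your proposal matches the paper's proof essentially step for step: the same subfile decomposition $W_{n,\mathcal{T}}$, the same index-coding instance, the same permutation-indexed acyclic set (your $\mathcal{J}_\sigma$ is exactly the vertex set in the paper's acyclicity lemma, namely all desired subfiles whose $\mathcal{U}$ entirely precedes $\mathcal{T}$ in the permutation), the same averaging over injective demands and all $\Lambda!$ permutations yielding the coefficient $\binom{\Lambda}{t+\lambda}/\binom{\Lambda}{t}$, and the same linear program resolved through convexity and monotonicity of $f(t)=\binom{\Lambda}{t+\lambda}/\binom{\Lambda}{t}$, matched against the scheme of Muralidhar \emph{et al.} The only differences are cosmetic: you phrase the subfile counting probabilistically rather than by explicit enumeration, and you supply a short potential-function argument for acyclicity, a proof the paper defers to its longer version.
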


\begin{proof}
    The proof of achievability comes from the coding scheme in~\cite{Muralidhar2021MaddahAliNiesen}, whereas the proof of the converse bound is presented in \Cref{sec: Proof of the Converse Bound}.
\end{proof}

The longer version in~\cite{Brunero2021FundamentalLimitsCombinatorial} of this work provides also another interesting result that we will simply mention here. The system model in~\cite{Muralidhar2021MaddahAliNiesen} considers a single value of $\lambda \in [\Lambda]$ for a given $\Lambda$, although such setting can be extended into a \emph{generalized} combinatorial multi-access model which supports the coexistence of groups of users connected to different numbers of caches. This extension can be obtained if we consider a simultaneous coexistence of each model in~\cite{Muralidhar2021MaddahAliNiesen} for each $\lambda \in [\Lambda]$. The quite surprising outcome in~\cite{Brunero2021FundamentalLimitsCombinatorial} is summarized in the following remark.

\begin{remark}
    For the MACC problem with generalized combinatorial topology, there is no need to encode over users which are connected to different numbers of caches, even though there is abundance of coding opportunities. Instead, applying the coding scheme in~\cite{Muralidhar2021MaddahAliNiesen} in a TDMA-like manner is sufficient to achieve the optimal communication load under uncoded prefetching. An additional powerful insight that comes out of this is that the basic $\Lambda$-cache MAN placement proves to be extremely effective as it allows for the optimal performance for any instance of the generalized combinatorial topology.
\end{remark}

We now proceed by presenting the converse proof of \Cref{thm: Converse Bound Theorem}.

\section{Proof of the Converse Bound}\label{sec: Proof of the Converse Bound}

The converse relies on the well-known acyclic subgraph index coding bound, which has been extensively used in various other settings (see for example~\cite{Wan2020IndexCodingApproach, Parrinello2020FundamentalLimitsCoded, Brunero2021UnselfishCodedCaching} to name a few) in order to derive lower bounds on the optimal worst-case load in caching under the assumption of uncoded prefetching. To clarify the connection between this bound and our setting, we start by providing a brief presentation of the index coding problem and its connection to coded caching.

\subsection{Definition of the Index Coding Problem}

The index coding problem \cite{BarYossef2011IndexCodingSide} consists of a central server having access to $N'$ independent messages, and of $K'$ users that are connected to the server via a shared error-free broadcast channel. Each user $k \in [K']$ has a set of desired messages $\mathcal{M}_k \subseteq [N']$, which is called the \emph{desired message set}, while also having access to another subset of messages $\mathcal{A}_k \subseteq [N']$, which is called the \emph{side information set}. To avoid trivial scenarios, it is commonly assumed that $\mathcal{M}_k \neq \emptyset$, $\mathcal{A}_k \neq [N']$ and $\mathcal{M}_k \cap \mathcal{A}_k = \emptyset$ for each $k \in [K']$.

The index coding problem is usually described in terms of its \emph{side information graph}. Let $M_i$ be the $i$-th message for some $i \in [N']$. Then, such graph is a directed graph where each vertex is a desired message and where there exists an edge from a desired message $M_i$ to a desired message $M_j$ if and only if message $M_i$ is in the side information set of the user requesting message $M_j$. The derivation of our converse is based on the following bound from~\cite[Corollary 1]{Arbabjolfaei2013CapacityRegionIndex}.

\begin{lemma}[{\cite[Corollary 1]{Arbabjolfaei2013CapacityRegionIndex}}]\label{lem: Acyclic Subgraph Converse Bound}
    Consider an index coding problem with $N'$ messages $M_i$ for $i \in [N']$. The minimum number of transmitted bits $\rho$ is lower bounded as
    \begin{equation}
        \rho \geq \sum_{i \in \mathcal{J}}|M_i|
    \end{equation}
    for any acyclic subgraph $\mathcal{J}$ of the side information graph.
\end{lemma}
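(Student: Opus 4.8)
The plan is to prove the bound information-theoretically, by lower bounding $\rho$ with the entropy $H(X)$ of the broadcast codeword $X$ and then showing that $H(X) \geq \sum_{i \in \mathcal{J}} |M_i|$. Since the length of any valid codeword is at least its entropy, and since $X$ must allow every user to decode its desired messages from $X$ together with its side information, the inequality $\rho \geq H(X)$ is immediate; the entire content of the lemma then reduces to extracting the claimed entropy from $X$ using only the acyclic part of the side information graph. Throughout I would assume, as is standard, that the messages $M_i$ are mutually independent and uniform, so that $H(M_i) = |M_i|$ and $H(M_{\mathcal{J}}) = \sum_{i \in \mathcal{J}} |M_i|$, writing $M_{\mathcal{J}} = (M_i : i \in \mathcal{J})$.

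First I would isolate the messages of $\mathcal{J}$ by conditioning on all the remaining ones. Writing $\bar{\mathcal{J}} = [N'] \setminus \mathcal{J}$ and using that conditioning cannot increase entropy, that $X$ is a deterministic function of all messages (so $H(X \mid M_{\mathcal{J}}, M_{\bar{\mathcal{J}}}) = 0$), and that the messages are independent, I would obtain
\[
H(X) \geq H(X \mid M_{\bar{\mathcal{J}}}) = I(X; M_{\mathcal{J}} \mid M_{\bar{\mathcal{J}}}) = H(M_{\mathcal{J}}) - H(M_{\mathcal{J}} \mid X, M_{\bar{\mathcal{J}}}).
\]
This reduces the task to a single vanishing-conditional-entropy claim, $H(M_{\mathcal{J}} \mid X, M_{\bar{\mathcal{J}}}) = 0$: once the codeword and every message outside $\mathcal{J}$ are revealed, all messages inside $\mathcal{J}$ must be determined.

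Establishing that vanishing term is the crux, and it is exactly where acyclicity enters. Because the subgraph induced on $\mathcal{J}$ is acyclic it admits a topological order, so I would enumerate $\mathcal{J} = \{j_1, \dots, j_m\}$ in such a way that every edge of the induced subgraph runs from a lower-indexed to a higher-indexed vertex. Reading off what this means in terms of side information: the user requesting $M_{j_t}$ can contain, among the messages of $\mathcal{J}$, only $M_{j_1}, \dots, M_{j_{t-1}}$ in its side information set $\mathcal{A}$, since an incoming edge at $j_t$ from some $j_a \in \mathcal{J}$ forces $a < t$. Expanding $H(M_{\mathcal{J}} \mid X, M_{\bar{\mathcal{J}}})$ by the chain rule in this order, I would peel the messages off one at a time: for each $t$, the conditioning variables $X, M_{\bar{\mathcal{J}}}, M_{j_1}, \dots, M_{j_{t-1}}$ already contain the codeword together with the entire side information set of the decoder of $M_{j_t}$, so the decodability constraint $H(M_{j_t} \mid X, M_{\mathcal{A}}) = 0$ gives $H(M_{j_t} \mid X, M_{\bar{\mathcal{J}}}, M_{j_1}, \dots, M_{j_{t-1}}) = 0$. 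Summing these vanishing terms yields $H(M_{\mathcal{J}} \mid X, M_{\bar{\mathcal{J}}}) = 0$.

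Combining the two displays gives $\rho \geq H(X) \geq H(M_{\mathcal{J}}) = \sum_{i \in \mathcal{J}} |M_i|$, as claimed. The only genuinely delicate point is fixing the correct orientation of the topological order, so that the accumulated conditioning in the chain-rule expansion supplies precisely the side information that each successive decoder requires; with the direction reversed the peeling step fails, because a decoder's side information would reference a message not yet revealed. Everything else is the standard toolbox of conditioning, independence of the messages, and the fact that a deterministic encoder has zero conditional entropy given its inputs.
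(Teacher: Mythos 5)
Your proof is correct, and there is nothing to compare it against within the paper itself: the paper does not prove this lemma but imports it verbatim from~\cite[Corollary 1]{Arbabjolfaei2013CapacityRegionIndex}, and your argument --- lower bounding $\rho$ by $H(X)$, reducing to the single claim $H(M_{\mathcal{J}} \mid X, M_{\bar{\mathcal{J}}}) = 0$, and establishing it by chain-rule peeling along a topological order of the acyclic induced subgraph --- is precisely the standard proof of that cited result. You also handled the one genuinely delicate point correctly: under this paper's convention an edge runs from $M_i$ to $M_j$ exactly when $M_i$ lies in the side information of the decoder of $M_j$, so ordering $\mathcal{J}$ with all edges pointing from lower to higher index guarantees that, at step $t$, the conditioning set $\bigl(X, M_{\bar{\mathcal{J}}}, M_{j_1}, \dots, M_{j_{t-1}}\bigr)$ contains the entire side information of the decoder of $M_{j_t}$, so the zero-error decodability constraint kills each term of the expansion.
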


\subsection{Main Proof}

The first step in our converse proof consists of dividing, in the most generic manner, each file into a maximum of $2^\Lambda$ disjoint subfiles as
\begin{equation}\label{eqn: File Splitting}
    W_{n} = \left\{W_{n, \mathcal{T}} : \mathcal{T} \subseteq [\Lambda] \right\}, \quad \forall n \in [N]
\end{equation}
where we identify with $W_{n, \mathcal{T}}$ the subfile which is exclusively stored by the caches in $\mathcal{T}$. Noticing that the bits of the library files are simply copied within the caches, such placement is uncoded according to \Cref{def: Uncoded Cache Placement}.

\subsubsection{Constructing the Index Coding Bound}

Assuming that each user requests a distinct\footnote{The set of worst-case demands may not include the set of demand vectors $\bm{d}$ with all distinct entries. However, our goal is to derive a converse bound on the worst-case load, hence this is not a problem. Indeed, the choice of treating distinct demands yields a valid converse bound, since such bound does not need to be, a priori, the tightest bound. In our case, it proves to be tight.} file, we consider the index coding problem with $K' = K = \binom{\Lambda}{\lambda}$ users and $N' = \binom{\Lambda}{\lambda} 2^{\Lambda - \lambda}$ independent messages, where each such message represents a subfile requested by some user (who naturally does not have access to it via a cache). Recalling that $W_{d_{\mathcal{U}}}$ denotes the file requested by the user identified by $\mathcal{U}$, the desired message set and the side information set are respectively given, in their most generic form, by
\begin{align}
    \mathcal{M}_{\mathcal{U}} & = \{W_{d_{\mathcal{U}}, \mathcal{T}} : \mathcal{T} \subseteq [\Lambda] \setminus \mathcal{U}\} \\
    \mathcal{A}_{\mathcal{U}} & = \{W_{n, \mathcal{T}} : n \in [N], \mathcal{T} \subseteq [\Lambda], \mathcal{T} \cap \mathcal{U} \neq \emptyset\}
\end{align}
for each user $\mathcal{U}$ with $\mathcal{U} \subseteq [\Lambda]$ and $|\mathcal{U}| = \lambda$. Here, the side information graph consists of a directed graph where each vertex is a subfile, and where there is an edge from the subfile $W_{d_{\mathcal{U}_1}, \mathcal{T}_1}$ to the subfile $W_{d_{\mathcal{U}_2}, \mathcal{T}_2}$ if and only if $W_{d_{\mathcal{U}_1}, \mathcal{T}_1} \in \mathcal{A}_{\mathcal{U}_2}$ with $\mathcal{U}_j \subseteq [\Lambda]$, $|\mathcal{U}_j| = \lambda$, $\mathcal{T}_j \subseteq [\Lambda] \setminus \mathcal{U}_j$ for $j \in \{1, 2\}$ and $\mathcal{U}_1 \neq \mathcal{U}_2$. Since our aim is to apply \Cref{lem: Acyclic Subgraph Converse Bound}, we need to consider acyclic sets of vertices $\mathcal{J}$ in the side information graph. Toward this, we take advantage of the following lemma.

\begin{lemma}\label{lem: Acyclic Lemma}
    Let $\bm{d} = (d_{\mathcal{U}} : \mathcal{U} \subseteq [\Lambda], |\mathcal{U}| = \lambda)$ be a demand vector and let $\bm{c} = (c_1, \dots, c_\Lambda)$ be a permutation of the $\Lambda$ caches. The following set of vertices
    \begin{equation}
        \bigcup_{i \in [\lambda : \Lambda]}  \bigcup_{\substack{\mathcal{U}^i \subseteq \{c_1, \dots, c_i\} : |\mathcal{U}^i| = \lambda,\\ c_i \in \mathcal{U}^i}} \bigcup_{\mathcal{T}_i \subseteq [\Lambda] \setminus \{c_1, \dots, c_i\}} \left\{ W_{d_{\mathcal{U}^i}, \mathcal{T}_i} \right\}
    \end{equation}
    is acyclic.
\end{lemma}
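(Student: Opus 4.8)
The plan is to exhibit a topological ordering of the chosen vertices under which every directed edge points strictly forward; the existence of such an ordering immediately rules out directed cycles. The ordering I would use is induced by a potential function on the users defined through the permutation $\bm{c}$.

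First I would unpack the structure of the vertex set. For any user $\mathcal{U}$ (a $\lambda$-subset of $[\Lambda]$), define its \emph{rank} $i(\mathcal{U}) \coloneqq \max\{j \in [\Lambda] : c_j \in \mathcal{U}\}$, i.e.\ the largest position in the permutation occupied by a cache of $\mathcal{U}$. I would observe that the inner conditions $\mathcal{U}^i \subseteq \{c_1,\dots,c_i\}$ and $c_i \in \mathcal{U}^i$ hold simultaneously if and only if $i = i(\mathcal{U}^i)$: the containment forces $i \geq i(\mathcal{U}^i)$, while $c_i \in \mathcal{U}^i$ forces $i \leq i(\mathcal{U}^i)$. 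Consequently the three nested unions simply partition the vertices according to the rank $i$ of their underlying user, and each user contributes exactly the subfiles $W_{d_{\mathcal{U}},\mathcal{T}}$ with $\mathcal{T} \subseteq \{c_{i+1},\dots,c_\Lambda\}$, that is, with every cache of $\mathcal{T}$ lying strictly beyond position $i$ in the permutation.

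Next I would analyze when an edge can exist between two vertices of the set. Recall that there is an edge from $W_{d_{\mathcal{U}_1},\mathcal{T}_1}$ to $W_{d_{\mathcal{U}_2},\mathcal{T}_2}$ precisely when $W_{d_{\mathcal{U}_1},\mathcal{T}_1} \in \mathcal{A}_{\mathcal{U}_2}$, i.e.\ when $\mathcal{T}_1 \cap \mathcal{U}_2 \neq \emptyset$. Suppose such an edge exists and set $i_1 = i(\mathcal{U}_1)$ and $i_2 = i(\mathcal{U}_2)$. Any cache $c_j \in \mathcal{T}_1 \cap \mathcal{U}_2$ satisfies $j > i_1$, because $\mathcal{T}_1 \subseteq \{c_{i_1+1},\dots,c_\Lambda\}$ by the previous paragraph; and it satisfies $j \leq i_2$, because $c_j \in \mathcal{U}_2$ and $i_2$ is by definition the largest position occupied by a cache of $\mathcal{U}_2$. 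Hence $i_1 < i_2$: every edge of the induced subgraph strictly increases the rank of the underlying user.

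Finally I would conclude. Ordering the vertices by increasing rank $i$ (breaking ties arbitrarily) yields an arrangement in which every edge runs from an earlier vertex to a strictly later one, since edges can only join users of distinct ranks and always point from the smaller rank to the larger. A directed cycle would have to return to its starting vertex while the rank strictly increases at each step, which is impossible; therefore the set is acyclic. The only point requiring care is the bookkeeping that identifies the nested unions with the rank partition and tracks the position inequalities across $\mathcal{T}_1$ and $\mathcal{U}_2$; once the rank $i(\cdot)$ is recognized as a strictly monotone potential along edges, acyclicity follows with no further computation.
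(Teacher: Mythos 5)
Your proof is correct and takes essentially the same route as the paper's (which appears in the longer version~\cite{Brunero2021FundamentalLimitsCombinatorial}): both arguments rest on the observation that for vertices in the chosen set, $\mathcal{T}_{i_1} \cap \mathcal{U}^{i_2} = \emptyset$ whenever $i_2 \leq i_1$, so every edge must point from a vertex whose user completes at an earlier position of $\bm{c}$ to one completing at a strictly later position, and the permutation order is thus a topological order precluding cycles. Your explicit identification of the nested unions with the rank $i(\mathcal{U}) = \max\{j : c_j \in \mathcal{U}\}$ is a clean formalization of the same idea.
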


\begin{proof}
    Due to lack of space, the proof is relegated to the longer version of this work~\cite{Brunero2021FundamentalLimitsCombinatorial}, where the proof is provided for a generalization of the lemma here presented.
\end{proof}

Consider a demand vector $\bm{d}$ and a permutation $\bm{c}$ of the set $[\Lambda]$. Applying \Cref{lem: Acyclic Lemma} yields the following lower bound
\begin{equation}\label{eqn: Index Coding Lower Bound 1}
    BR^\star_{\textnormal{u}} \geq R(\bm{d}, \bm{c})
\end{equation}
where $R(\bm{d}, \bm{c})$ is defined as
\begin{equation}
    R(\bm{d}, \bm{c}) \coloneqq \sum_{i = \lambda}^{\Lambda} \sum_{\substack{\mathcal{U}^i \subseteq \{c_1, \dots, c_i\} : |\mathcal{U}^i| = \lambda,\\ c_i \in \mathcal{U}^i}} \sum_{\mathcal{T}_i \subseteq [\Lambda] \setminus \{c_1, \dots, c_i\}} \left| W_{d_{\mathcal{U}^i}, \mathcal{T}_i} \right|.
\end{equation}

\subsubsection{Constructing the Optimization Problem}

Now our goal is to create several bounds as the one in~\eqref{eqn: Index Coding Lower Bound 1} considering any vector $\bm{d} \in \mathcal{D}$ and any vector $\bm{c} \in \mathcal{C}$, where we denote by $\mathcal{D}$ and $\mathcal{C}$ the set of possible demand vectors with distinct entries and the set of possible permutation vectors of the set $[\Lambda]$, respectively. Our aim is then to average all these bounds to obtain in the end a useful lower bound on the optimal worst-case load. Considering that $|\mathcal{D}| = \binom{N}{K}K!$ and $|\mathcal{C}| = \Lambda!$, we aim to simplify the expression given by
\begin{equation}\label{eqn: Complete Lower Bound 1}
    \binom{N}{K}K!\Lambda! BR^\star_{\textnormal{u}} \geq \sum_{\bm{d} \in \mathcal{D}} \sum_{\bm{c} \in \mathcal{C}} R(\bm{d}, \bm{c}).
\end{equation}
Toward simplifying~\eqref{eqn: Complete Lower Bound 1}, we proceed by counting how many times each subfile $W_{n, \mathcal{T}}$ --- for any given $n \in [N]$, $\mathcal{T} \subseteq [\Lambda]$ and $|\mathcal{T}| = t'$ for some $t' \in [0 : \Lambda]$ --- appears in~\eqref{eqn: Complete Lower Bound 1}.

Let us focus on the subfile $W_{n, \mathcal{T}}$ for some $n \in [N]$, $\mathcal{T} \subseteq [\Lambda]$ and $|\mathcal{T}| = t'$ with $t' \in [0 : \Lambda]$. Assume that the file $W_n$ is demanded by user $\mathcal{U}$ for some $\mathcal{U} \subseteq [\Lambda] \setminus \mathcal{T}$ with $|\mathcal{U}| = \lambda$ and denote by $\mathcal{D}_{n, \mathcal{U}}$ the set of demands such that $d_{\mathcal{U}} = n$. Out of the entire set $\mathcal{D}$ of all possible distinct demands, we find a total of $\binom{N}{K}K!/N$ distinct demands for which a file is requested by the same user, hence it holds $\left| \mathcal{D}_{n, \mathcal{U}} \right| = \binom{N}{K}K!/N$. For each $\bm{d} \in \mathcal{D}_{n, \mathcal{U}}$ and for each $\bm{c} \in \mathcal{C}$ there is a corresponding bound $R(\bm{d}, \bm{c})$. The subfile $W_{n, \mathcal{T}}$ appears only in the bounds induced by permutation vectors $\bm{c} \in \mathcal{C}$ such that the elements in the set $\mathcal{U}$ appear in the vector $\bm{c}$ before\footnote{Indeed, the subfile $W_{n, \mathcal{T}}$ appears in the acyclic graph chosen as in~\Cref{lem: Acyclic Lemma} for all those permutations $\bm{c} = (c_1, \dots, c_\Lambda)$ for which $\mathcal{U} = \mathcal{U}^i$ and $\mathcal{T} = \mathcal{T}_i$ for some $i \in [\lambda : \Lambda]$ such that $\mathcal{U}^i \subseteq \{c_1, \dots, c_i\}$ with $|\mathcal{U}^i| = \lambda$ and $c_i \in \mathcal{U}^i$, and such that $\mathcal{T}_i \subseteq \{c_{i + 1}, \dots, c_{\Lambda}\}$, i.e., this happens whenever the elements in $\mathcal{T}$ are after the elements in $\mathcal{U}$ in the permutation vector $\bm{c}$.} the elements in the set $\mathcal{T}$. If we denote by $\mathcal{C}_{\mathcal{U}, \mathcal{T}}$ the set of such permutation vectors, it can be verified that $\left| \mathcal{C}_{\mathcal{U}, \mathcal{T}} \right| = \lambda!t'!(\Lambda - \lambda - t')!\binom{\Lambda}{t' + \lambda}$. Hence, the subfile $W_{n, \mathcal{T}}$ is counted a total of $\lambda!t'!(\Lambda - \lambda - t')!\binom{\Lambda}{t' + \lambda}\binom{N}{K}K!/N$ times when considering the bounds $R(\bm{d}, \bm{c})$ with $\bm{d} \in \mathcal{D}_{n, \mathcal{U}}$ and $\bm{c} \in \mathcal{C}_{\mathcal{U}, \mathcal{T}}$. The same reasoning follows for each $\mathcal{U} \subseteq [\Lambda] \setminus \mathcal{T}$ with $|\mathcal{U}| = \lambda$, hence the subfile $W_{n, \mathcal{T}}$ is counted a total of
\begin{equation}
     a_{t'} = \binom{\Lambda - t'}{\lambda} \lambda!t'!(\Lambda - \lambda - t')!\binom{\Lambda}{t' + \lambda} \frac{\binom{N}{K}K!}{N}
\end{equation}
times, which gives us the number of times this same subfile appears in~\eqref{eqn: Complete Lower Bound 1}. The same reasoning follows for any $n \in [N]$ and for any $\mathcal{T} \subseteq [\Lambda]$ with $|\mathcal{T}| = t'$. Thus, the expression in~\eqref{eqn: Complete Lower Bound 1} can be rewritten as
\begin{align}
    R^\star_{\text{u}} & \geq \frac{1}{\binom{N}{K}K!\Lambda!} \sum_{t' = 0}^{\Lambda} N a_{t'} x_{t'} \\
                          & = \sum_{t' = 0}^{\Lambda} \frac{\binom{\Lambda}{t' + \lambda}}{\binom{\Lambda}{t'}} x_{t'} \\
                          & = \sum_{t' = 0}^{\Lambda} f(t') x_{t'}
\end{align}
where $f(t')$ and $x_{t'}$ are defined as
\begin{align}
    f(t') & \coloneqq \frac{\binom{\Lambda}{t' + \lambda}}{\binom{\Lambda}{t'}} \\
    0 \leq x_{t'} & \coloneqq \sum_{n \in [N]} \sum_{\mathcal{T} \subseteq [\Lambda] : |\mathcal{T}| = t'} \frac{\left| W_{n, \mathcal{T}} \right|}{NB}.
\end{align}

At this point, we seek to lower bound the minimum worst-case load $R^\star_{\text{u}}$ by lower bounding the solution to the following optimization problem
\begin{subequations}\label{eqn: Optimization Problem 1}
  \begin{alignat}{2}
    & \min_{\bm{x}}  & \quad & \sum_{t' = 0}^{\Lambda} f(t') x_{t'} \\
    & \text{subject to}  &  & \sum_{t' = 0}^{\Lambda} x_{t'} = 1 \label{eqn: File-Size Constraint} \\
    & & & \sum_{t' = 0}^{\Lambda} t' x_{t'} \leq \frac{\Lambda M}{N} \label{eqn: Memory-Size Constraint}
  \end{alignat}
\end{subequations}
where \eqref{eqn: File-Size Constraint} and \eqref{eqn: Memory-Size Constraint} correspond to the file-size constraint and the cumulative cache-size constraint, respectively.

\subsubsection{Lower Bounding the Solution to the Optimization Problem}

Since the auxiliary variable $x_{t'}$ can be considered as a probability mass function, the optimization problem in \eqref{eqn: Optimization Problem 1} can be seen as the minimization of $\mathbb{E}[f(t')]$. Moreover, the following holds.
\begin{lemma}\label{lem: Strictly Decreasing Sequence}
  The function $f(t')$ is convex and decreasing in $t'$.
\end{lemma}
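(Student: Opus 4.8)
The plan is to recast $f$ through a multiplicative factorization and then argue monotonicity and convexity factor by factor. The starting observation is the elementary identity $\binom{\Lambda}{t'+\lambda}\binom{t'+\lambda}{\lambda} = \binom{\Lambda}{t'}\binom{\Lambda-t'}{\lambda}$ (both sides equal $\Lambda!/[t'!\,\lambda!\,(\Lambda-t'-\lambda)!]$), which lets me write
\[
    f(t') = \frac{\binom{\Lambda}{t'+\lambda}}{\binom{\Lambda}{t'}} = \frac{\binom{\Lambda-t'}{\lambda}}{\binom{t'+\lambda}{\lambda}} = u(t')\,v(t'),
\]
with $u(t') \coloneqq \binom{\Lambda-t'}{\lambda}$ and $v(t') \coloneqq \binom{t'+\lambda}{\lambda}^{-1}$. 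It then suffices to show that $u$ and $v$ are, as sequences on $[0:\Lambda]$, nonnegative, non-increasing and convex, and that these three properties are inherited by the product.

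Monotonicity of $f$ is then immediate, since a product of nonnegative non-increasing sequences is non-increasing (and the ratio $f(t'+1)/f(t') = \frac{(\Lambda-\lambda-t')(t'+1)}{(\Lambda-t')(t'+\lambda+1)}<1$ confirms strict decrease wherever $f>0$). For the convexity of the factors I would argue differently in each case. For $u$, Pascal's rule yields the exact differences $u(t'+1)-u(t') = -\binom{\Lambda-t'-1}{\lambda-1}\le 0$ and
\[
    u(t'+2)-2u(t'+1)+u(t') = \binom{\Lambda-t'-2}{\lambda-2}\ge 0,
\]
which hold for every $t'$ under the stated zero-convention and, crucially, already account for the behaviour at $t'=\Lambda-\lambda$ where $u$ drops to $0$. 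For $v$, I would move to the continuous extension $V(x) = \lambda!\,/\prod_{i=1}^{\lambda}(x+i)$, which matches $v$ at the integers; writing $V = e^{-\log w}$ with $w(x)=\tfrac{1}{\lambda!}\prod_{i=1}^{\lambda}(x+i)$, the identity $(\log w)''(x) = -\sum_{i=1}^{\lambda}(x+i)^{-2}<0$ shows $w$ is log-concave, so $-\log w$ is convex and $V=e^{-\log w}$ is convex on $[0,\infty)$; sampling a convex function at equally spaced points returns a convex sequence, so $v$ is convex.

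Finally I would combine the factors through the discrete Leibniz identity
\[
    \Delta^2(uv)(t') = u(t'+2)\,\Delta^2 v(t') + v(t'+1)\,\Delta^2 u(t') + \Delta v(t')\,[\Delta u(t'+1)+\Delta u(t')],
\]
with $\Delta$ the forward difference. Each summand is nonnegative: the first two because $u,v\ge 0$ and $\Delta^2 u,\Delta^2 v\ge 0$, and the third because $\Delta v\le 0$ multiplies the nonpositive quantity $\Delta u(t'+1)+\Delta u(t')$ (recall $u$ is non-increasing). Hence $\Delta^2 f\ge 0$, i.e.\ $f$ is convex. The main obstacle I anticipate is exactly this last step: convexity is not preserved by products in general, so the proof genuinely relies on combining convexity with the monotonicity of \emph{both} factors together with the sign accounting in the Leibniz identity. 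A secondary subtlety worth flagging is that $u$ must be handled via its exact discrete second difference and not via its polynomial continuous extension, since the latter turns negative for $t'>\Lambda-\lambda$ and would spuriously violate convexity even though the actual sequence simply stays at $0$ there.
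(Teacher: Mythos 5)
Your proof is correct, but be aware that there is no proof in this paper to compare it against: the authors relegate the proof of \Cref{lem: Strictly Decreasing Sequence} entirely to the extended version~\cite{Brunero2021FundamentalLimitsCombinatorial}, where the claim is established for a generalization of $f(t')$ (needed there for the generalized combinatorial topology in which users connect to different numbers of caches). Taken on its own terms, your argument is sound and self-contained. The factorization $f(t') = \binom{\Lambda-t'}{\lambda}/\binom{t'+\lambda}{\lambda}$ is valid on all of $[0:\Lambda]$, including the regime $t'+\lambda > \Lambda$ where both sides vanish under the paper's zero convention; the Pascal-rule differences $\Delta u(t') = -\binom{\Lambda-t'-1}{\lambda-1}$ and $\Delta^2 u(t') = \binom{\Lambda-t'-2}{\lambda-2}$ are exact and, as you stress, correctly handle the kink where $u$ hits zero at $t' = \Lambda-\lambda$; the convexity of $v$ via $V = e^{-\log w}$ with $-\log w$ convex is a clean and correct argument; and your discrete Leibniz identity does expand exactly to $u(t'+2)v(t'+2) - 2u(t'+1)v(t'+1) + u(t')v(t')$, with all three terms nonnegative for the reasons you give (the cross term being a product of two nonpositive quantities). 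You also correctly flag the two traps a careless argument would fall into: convexity alone is not preserved under products (your proof genuinely uses nonnegativity and monotonicity of \emph{both} factors), and the continuous polynomial extension of $u$ cannot substitute for its discrete second differences. Two minor wording points: your opening sentence asserts that nonnegativity, monotonicity and convexity ``are inherited by the product'' as if it were a known fact, whereas it is precisely what your Leibniz step later proves, so it should be phrased as a claim; and ``decreasing'' in the lemma must be read as non-increasing, since $f$ is identically zero for $t' > \Lambda - \lambda$ --- your proof delivers exactly this, which is all that the Jensen-plus-monotonicity argument in the converse of \Cref{thm: Converse Bound Theorem} requires.
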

\begin{proof}
    The proof is relegated to the longer version of this work~\cite{Brunero2021FundamentalLimitsCombinatorial}, where the proof is provided for a generalization of the function here denoted by $f(t')$.
\end{proof}
Taking advantage of \Cref{lem: Strictly Decreasing Sequence}, we can write $\mathbb{E}[f(t')] \geq f(\mathbb{E}[t'])$ using Jensen's inequality. Then, considering that $f(t')$ is also decreasing with increasing $t' \in [0 : \Lambda]$, we can further write $f(\mathbb{E}[t']) \geq f(\Lambda M/N)$ taking advantage of the fact that $\mathbb{E}[t']$ is upper bounded as in \eqref{eqn: Memory-Size Constraint}. Consequently, $\mathbb{E}[f(t')] \geq f(\Lambda M/N)$, and thus for $t \coloneqq \Lambda M/N$ the optimal worst-case load $R^\star_{\text{u}}$ is lower bounded by $R_{\text{LB}}$ which is a piecewise linear curve with corner points
\begin{equation}\label{eqn: Lower Bound}
  (M, R_{\text{LB}}) = \left(t\frac{N}{\Lambda}, \frac{\binom{\Lambda}{t + \lambda}}{\binom{\Lambda}{t}} \right), \quad \forall t \in [0 : \Lambda].
\end{equation}
This concludes the proof.\qed

\section{Conclusions}\label{sec: Conclusions}

In this work, we derived the fundamental limits of a coded caching scenario with exceptional potential. We proposed a novel information-theoretic converse that manages to capture the topological properties of the multi-access model in \Cref{sec: System Model}. The lower bound matches the achievable performance of the coding scheme in~\cite{Muralidhar2021MaddahAliNiesen}, hence allowing us to identify the exact optimal performance of multi-access caching with combinatorial topology under uncoded prefetching. Interestingly, our information-theoretic converse can be seen as a generalization of the lower bound in~\cite{Wan2020IndexCodingApproach}, which similarly proved the exact optimality of the MAN placement-and-delivery scheme.

As already mentioned, the longer version of this work in~\cite{Brunero2021FundamentalLimitsCombinatorial} provides a variety of interesting extensions. Indeed, the work in~\cite{Brunero2021FundamentalLimitsCombinatorial} not only provides a generalization of the combinatorial topology that allows for the coexistence of users connected to different numbers of caches, but also provides very interesting results regarding the topology-agnostic multi-access problem, offering novel lower bounds on the average worst-case performance when it is not known a priori how the $K$ users are connected to the $\Lambda$ caches in the system.

\bibliographystyle{IEEEtran}
\bibliography{references}


\end{document}